\documentclass[%
 reprint,
 amsmath,amssymb,
 aps,
 pra,
 nofootinbib
]{revtex4-2}
\usepackage[english]{babel}
\usepackage{xcolor}
\usepackage{amsthm}
\usepackage{bm}

\newtheorem{definition}{Definition}
\newtheorem{theorem}{Theorem}
\newtheorem{proposition}{Proposition}
\newtheorem{lemma}{Lemma}

\usepackage[
    unicode=true,
    colorlinks=true,
    linkcolor=blue,
    citecolor=blue,
    urlcolor=blue
]{hyperref}

\begin{document}

\title{New bound on $S_{1}\times S_{2}$-setting Bell locality of a nonseparable
Werner state}
\author{Elena R. Loubenets$^{1,2}$\\$^{1}$Department of Applied Mathematics, MIEM, HSE University, \\Moscow 123458, Russia\\$^{2}$Steklov Mathematical Institute of Russian Academy of Sciences, \\Moscow 119991, Russia}

\begin{abstract}
In many quantum applications it is important to know whether or not a Bell nonlocal two-qudit state exhibits its nonlocality under correlation scenarios
with some given numbers $S_{1},S_{2}\geq1$ of generalized quantum measurements at
two sites. In the present article, we find analytically a new general condition sufficient for a 
nonseparable Werner state with  a dimension $d\leq\min\{S_{1},S_{2}\}$ to satisfy
all Bell inequalities under every $S_{1}\times S_{2}$-setting correlation
scenario with outcomes of an arbitrary spectral type, discrete or continuous --  that is, to be $S_{1}\times S_{2}$-setting Bell local, for short. For a variety of $S_{1},S_{2}\geq1$ values, this
new general locality condition is beyond Werner's and Barrett's locality conditions for a
nonseparable Werner state. We also prove explicitly in the operator terms the optimization result by Terhal et. el. [Phys. Rev. Lett.
\textbf{90,} 157903 (2003)] via semi-programming that every nonseparable
Werner state  with a dimension  $d>\min\{S_{1},S_{2}\}$ is $S_{1}\times S_{2}$-setting Bell local. The new results of the present article are important both for Bell nonlocality theory and for quantum
applications based on Bell nonlocality. 

\end{abstract}

\maketitle

\section{Introduction}

Since the seminal papers of Bell \cite{1,2}, nonlocality of a bipartite
quantum state -- in the sense of its violation of a general Bell inequality,
\emph{Bell nonlocality,} has been studied by many authors, see \cite{3,4,5,6}
and references therein. If a bipartite quantum state satisfies all general
Bell inequalities, then it is referred to as Bell
local. On the relation between Bell locality and
Einstein--Podolsky--Rosen (EPR) \cite{7} locality, see \cite{3,6}.

Analysing violation the Clauser--Horne--Shimony--Holt (CHSH) inequality
\cite{2'} by a pure two-qudit state, Gisin and Peres \cite{8} proved that
every nonseparable pure two-qudit state is Bell nonlocal. Recently, this fact
was re-established in \cite{9} via the construction in terms of the pure state
concurrence of the lower and upper bounds on the maximal violation by a pure
two-qudit state of all general Bell inequalities.

However, nonseparability of a mixed two-qudit state does not imply its 
nonlocality and, in 1989, Werner introduced \cite{10} an example of mixed
two-qudit states, \emph{the Werner states,} and proved that, under a bipartite
correlation scenario with an arbitrary number of projective measurements at
each of sites, some of nonseparable Werner states admit the local hidden
variable (LHV) model constructed in \cite{10}, thus, exhibit Bell locality
beyond separability.

In 2002, Barrett \cite{11} constructed another LHV model for the Werner states and presented the condition sufficient for a nonseparable Werner state to admit this LHV model under correlation scenarios with an arbitrary number of generalized measurements at each of sites. 

For the review of the LHV models for other nonseparable
mixed two-qudit states, in particular, noisy states, see \cite{12,13,14} and
references therein.

Note that Werner's \cite{10} and Barrett's \cite{11} locality bounds are only sufficient but not necessary for locality
of nonseparable Werner states, so that a nonseparable Werner
state may be Bell local beyond these locality bounds.

More generally, Bell nonlocality of a nonseparable two-qudit state does not
exclude the situation where, under any correlation scenario with some specific
numbers, let $S_{1}$ and $S_{2},$ of generalized quantum measurements at each of
sites, this nonlocal two-qudit state behaves itself as Bell local -- in the sense that it 
satisfies all general Bell inequalities specified for these numbers
of measurement settings. In \cite{3,4,6}, we refer to this  type of Bell locality as the
$S_{1}\times S_{2}$-\emph{setting Bell locality } 

The study of $S_{1}\times
S_{2}$\emph{-}setting Bell locality was considered first
in\footnote{Independently and via different notions. For the comparison, see
Section II.C.} \cite{15,16} and further developed in the series of author's
articles on Bell nonlocality, see \cite{4,6} and references therein.

Nonlocality of Werner states is now used in many quantum applications and it
is important to know  whether or not a
nonlocal Werner state really exhibits its nonlocality  under every $S_{1}\times S_{2}$-setting
correlation scenario with generalized measurements at each of sites.  

Besides Werner's and Barrett's locality bounds for nonseparable Werner
states, it is also known due to the optimization results in \cite{15} that
every nonseparable Werner state of a dimension $d>\min\{S_{1},S_{2}\}$ is
$S_{1}\times S_{2}$-setting Bell local. It was also proved
\cite{17} that every nonseparable Werner state of dimension $d\geq3$
satisfies \cite{17} the perfect correlation form of the original Bell
inequality without necessarily obeying the condition on perfect correlations
\cite{18}.

In the present article, we find analytically a new general condition
sufficient for $S_{1}\times
S_{2}$-setting Bell locality of a nonseparable Werner state with a dimension $d\leq\min
\{S_{1},S_{2}\}$. For a variety of
$S_{1},S_{2}\geq1$, this new bound is beyond Werner's and Barrett's 
locality bounds for noseparable Werner states. We also prove explicitly in
operator terms the optimization result by Terhal et. el. \cite{15} via semi-programming that every nonseparable
Werner state with a dimension $d>\min\{S_{1},S_{2}\}$ is $S_{1}\times S_{2}%
$-setting Bell local.

\section{Preliminaries: source operators, tensor positivity, sufficient conditions}

In this Section, for our further consideration in Section III, we shortly
specify the main notions and statements, introduced originally in \cite{16,
17} and generalized/developed further in \cite{4} for finding a general analytical upper bound\footnote{See Eq.(43) in \cite{4}.} on the maximal violation by
an $N$-partite quantum state of all general Bell inequalities. The latter
bound implies in turn a general condition (Proposition 5
of [4]) sufficient for $S_{1}\times\cdots\times
S_{N}$\emph{-setting Bell locality of an N-partite quantum state}.  Specified for a bipartite case, this general condition extends the
settings of Theorems 1, 2 in \cite{15}, for the comparison see below Section II.C and Section II.D.

\subsection{Source operators}

For a quantum state $\rho$ on a complex finite-dimensional Hilbert space
$\mathcal{H}_{d_{1}}\otimes\mathcal{H}_{d_{2}}$, $d_{i}=\dim\mathcal{H}%
_{d_{i}}\geq2,$ $i=1,2,$ and positive integers $S_{1},S_{2}\geq1$, let
$T_{S_{1}\times S_{2}}^{(\rho)}$ be a Hermitian operator on $\mathcal{H}%
_{d}^{\otimes S_{1}}\otimes\mathcal{H}_{d}^{\otimes S_{2}}$, satisfying the
relation%
\begin{align}
&  \mathrm{tr[}T_{S_{1}\times S_{2}}^{(\rho)}\{\mathbb{I}_{\mathcal{H}_{d_{1}%
}^{\otimes k_{1}}}\otimes X_{1}\otimes\mathbb{I}_{\mathcal{H}_{d_{1}}%
^{\otimes(S_{1}-1-k_{1})}}\label{1}\\
&  \otimes\mathbb{I}_{\mathcal{H}_{d_{2}}^{\otimes k_{2}}}\otimes X_{2}%
\otimes\mathbb{I}_{\mathcal{H}_{d_{2}}^{\otimes(S_{2}-1-k_{2})}}\}]\nonumber\\
&  =\mathrm{tr}\left[  \rho\left\{  X_{1}\otimes X_{2}\right\}  \right]
,\nonumber\\
k_{1}  &  =0,...,(S_{1}-1),\text{ \ \ }k_{2}=0,...,(S_{2}-1),\nonumber
\end{align}
for any linear operators $X_{1},X_{2}$ on $\mathcal{H}_{d_{1}}$ and
$\mathcal{H}_{d_{2}},$ respectively. Here, $\mathbb{I}_{\left(  \mathcal{H}%
_{d}\right)  ^{\otimes k}}\otimes X\mid_{_{k=0}}:=$ $X\otimes\mathbb{I}%
_{\left(  \mathcal{H}_{d}\right)  ^{\otimes k}}\mid_{_{k=0}}:=X.$ Clearly,
$\mathrm{tr}[T_{S_{1}\times S_{2}}^{(\rho)}]=1$\ and $T_{_{1\times1}}^{(\rho
)}\equiv\rho.$ 

Relation (\ref{1})\ implies that the partial trace of operator
$T_{S_{1}\times S_{2}}^{(\rho)}$ on $\mathcal{H}_{d}^{\otimes S_{1}}%
\otimes\mathcal{H}_{d}^{\otimes S_{2}}$ over any selection of $(S_{1}-1)$
Hilbert spaces $\mathcal{H}_{d_{1}}$ in block $\mathcal{H}_{d_{1}}^{\otimes
S_{1}}$ and any selection of $(S_{2}-1)$ Hilbert spaces $\mathcal{H}_{d_{2}}$
in block $\mathcal{H}_{d}^{\otimes S_{2}}$ is invariant of a choice of these
selections and is equal to:%
\begin{align}
\mathrm{tr}_{\mathcal{H}_{d_{1}}^{\otimes(S_{1}-1)},\mathcal{H}_{d_{2}%
}^{\otimes(S_{2}-1)}}[T_{S_{1}\times S_{2}}^{(\rho)}]    =\rho,\label{2} \\
\forall\text{ }\mathcal{H}_{d_{i}}^{\otimes(S_{i}-1)}    \subset
\mathcal{H}_{d_{i}}^{\otimes S_{i}},\text{ }i=1,2.\nonumber
\end{align}

\begin{definition}
\label{source-operator} A Hermitian operator $T_{S_{1}\times S_{2}}^{(\rho)}$
on $\mathcal{H}_{d_{1}}^{\otimes S_{1}}\otimes\mathcal{H}_{d_{2}}^{\otimes
S_{2}}$ satisfying relation (\ref{1}) is called \cite{16,17,4} a $S_{1}\times
S_{2}$-setting source operator for a state $\rho$ on $\mathcal{H}_{d_{1}%
}\otimes\mathcal{H}_{d_{2}}.$ If a source operator $T_{S_{1}\times S_{2}%
}^{(\rho)}$ for a state $\rho$ is positive, then it constitutes an extension
of state $\rho$ to space $\mathcal{H}_{d_{1}}^{\otimes S_{1}}\otimes\mathcal{H}%
_{d_{2}}^{\otimes S_{2}}.$
\end{definition}

By Propositions 1 in \cite{16,4} for every state $\rho$ on $\mathcal{H}%
_{d_{1}}\otimes\mathcal{H}_{d_{2}}$ and arbitrary positive integers
$S_{1},S_{2}\geq1$ a source operator $T_{S_{1}\times S_{2}}^{(\rho)}$ exists.

If a Hermitian operator on a Hilbert space $\left(  \mathcal{H}_{d}\right)
^{\otimes L_{1}}\otimes\left(  \mathcal{H}_{d}\right)  ^{\otimes L_{2}}$,
$1\leq L_{n}'\leq S_{n},$ $n=1,2,$ is reduced from a source operator $T_{S_{1}%
\times S_{2}}^{(\rho)}$ on $\left(  \mathcal{H}_{d}\right)  ^{\otimes S_{1}%
}\otimes\left(  \mathcal{H}_{d}\right)  ^{\otimes S_{2}}$, then it constitutes
\cite{4} a $L_{1}\times L_{2}$-setting source operator for state $\rho.$

We stress that by Definition
\ref{source-operator} a source operator $T_{_{S_{1}\times S_{2}}%
}^{(\rho)}$ does not need to be invariant under permutations of tensor factors
in $\mathcal{H}_{d_{1}}^{\otimes S_{1}}$ and in $\mathcal{H}_{d}^{\otimes
S_{2}}.$ Take, for example, the Hermitian operator $\frac{1}{d_{1}%
d_{2}r_{d_{2}}^{(+)}}\mathbb{I}_{d_{1}}\otimes\mathrm{P}_{d_{2}}^{(+)}$
$\otimes\mathbb{I}_{d_{2}}$ on $\mathcal{H}_{d_{1}}\otimes\mathcal{H}_{d_{2}}^{\otimes3},$ where $\mathrm{P}_{d_{2}}^{(+)}$ is
the projection with rank $r_{d_{2}}^{(+)}=\frac{d_{2}(d_{2}+1)}{2}$ onto the
symmetric subspace of $\mathbb{I}_{d_{2}}^{\otimes2}.$ This Hermitian operator
is not invariant under permutations of spaces $\mathcal{H}_{d_{2}}$ in
$\mathcal{H}_{d_{2}}^{\otimes3}$. However, the partial trace of this operator
over a selection of two spaces $\mathcal{H}_{d_{2}}$ in $\mathcal{H}_{d_{2}%
}^{\otimes3}$ is the same for any of possible choices and is equal to
$\frac{\mathbb{I}_{d_{1}}\otimes\mathbb{I}_{d_{2}}}{d_{1}d_{2}}.$ Therefore,
by Definition 1, this Hermitian operator constitutes a $1\times3$--setting positive
source operator $T_{1\times3}^{(\rho)}$ for the maximally mixed state
$\rho=\frac{\mathbb{I}_{d_{1}}\otimes\mathbb{I}_{d_{2}}}{d_{1}d_{2}}.$

\subsection{Tensor positivity}

The following general notion is introduced in Section 2 of \cite{4}. Here, we
consider only a finite-dimensional case.

\begin{definition}
\label{tensor-positive} Let $W$ be a linear operator on a Hilbert space
$\mathcal{G}_{1}\otimes\cdots\otimes\mathcal{G}_{m},$ $m\geq2.$ If $W$
satisfies the relation%
\begin{equation}
\left(  \psi_{1}\otimes\cdots\otimes\psi_{m},Z\text{ }\psi_{1}\otimes
\cdots\otimes\psi_{m}\right)  \geq0 \label{3}%
\end{equation}
for all $\psi_{1}\in\mathcal{G}_{1},...,\psi_{m}\in\mathcal{G}_{m},$ then it
is called tensor positive and we denote this by $W\overset{\otimes}{\geq}0.$
\end{definition}

Every positive operator on $\mathcal{G}_{1}\otimes\mathcal{\cdots}%
\otimes\mathcal{G}_{m},$ $m\geq2,$ is tensor positive, but not vice versa.

For each tensor positive operator $W\overset{\otimes}{\geq}0,$ relation
$\mathrm{tr}\left[  W\{X_{1}\otimes\cdots\otimes X_{m}\}\right]  \geq0$ holds
for any positive operators $X_{1},...,X_{m}$ on spaces $\mathcal{G}%
_{1},..,\mathcal{G}_{m}$, respectively. In particular, $\mathrm{tr}[W]\geq0. $

We stress that, for an arbitrary two-qudit state $\rho$, a $S_{1}\times S_{2}%
$-setting source operator $T_{S_{1}\times S_{2}}^{(\rho)}$ exists\emph{\ }%
\cite{16, 4}, however, \emph{does not need to be tensor positive. }By Lemma 1
in \cite{4} we have the following general statement.
\begin{proposition}
\label{reduced} If operator $T_{S_{1}\times S_{2}}^{(\rho)}$ on $\mathcal{H}%
_{d_{1}}^{\otimes S_{1}}\otimes\mathcal{H}_{d_{2}}^{\otimes S_{2}}$ is a
tensor positive source operator for a state $\rho$ on 
$\mathcal{H}_{d_{1}}\otimes\mathcal{H}_{d_{2}},$ then the source operator
$T_{L_{1}\times L_{2}}^{(\rho)}$ on a Hilbert space $\mathcal{H}_{d}^{\otimes
L_{1}}\otimes\mathcal{H}_{d}^{\otimes L_{2}}$, $1\leq L_{i}\leq S_{n},$
$L_{1}+L_{2}<S_{1}+S_{2},$ reduced from $T_{S_{1}\times S_{2}}^{(\rho)}$ is
also tensor positive, the converse is not true.
\end{proposition}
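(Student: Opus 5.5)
The plan has two parts. First we show that tensor positivity passes to reductions. Then we give a counterexample to the converse.

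For the direct statement, let $T_{L_{1}\times L_{2}}^{(\rho)}$ be obtained from $T_{S_{1}\times S_{2}}^{(\rho)}$ by tracing out $(S_{1}-L_{1})$ factors $\mathcal{H}_{d_{1}}$ in the first block and $(S_{2}-L_{2})$ factors $\mathcal{H}_{d_{2}}$ in the second. Fix arbitrary vectors $\psi_{1},\dots,\psi_{L_{1}}\in\mathcal{H}_{d_{1}}$ and $\phi_{1},\dots,\phi_{L_{2}}\in\mathcal{H}_{d_{2}}$. Writing the partial trace with an orthonormal basis $\{e_{j}\}$ of $\mathcal{H}_{d_{1}}$ and $\{f_{j}\}$ of $\mathcal{H}_{d_{2}}$ on each traced factor gives
\begin{equation*}
\left(\Psi,T_{L_{1}\times L_{2}}^{(\rho)}\Psi\right)=\sum_{j}\left(\Psi\otimes\chi_{j},\,T_{S_{1}\times S_{2}}^{(\rho)}\,\Psi\otimes\chi_{j}\right).
\end{equation*}
Here $\Psi=\psi_{1}\otimes\cdots\otimes\phi_{L_{2}}$, each $\chi_{j}$ is a product of basis vectors on the traced factors, and the tensor factors are understood to be reordered into their original positions. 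Every vector $\Psi\otimes\chi_{j}$ is a product vector in $\mathcal{H}_{d_{1}}^{\otimes S_{1}}\otimes\mathcal{H}_{d_{2}}^{\otimes S_{2}}$. Tensor positivity of $T_{S_{1}\times S_{2}}^{(\rho)}$ (Definition \ref{tensor-positive}) therefore makes each summand nonnegative, so the sum is nonnegative. This is exactly $T_{L_{1}\times L_{2}}^{(\rho)}\overset{\otimes}{\geq}0$. Reduced operators are again source operators for $\rho$, as noted after Definition \ref{source-operator}. A shorter route to the same conclusion is the remark following Definition \ref{tensor-positive}, with $X=\mathbb{I}$ on the traced factors.

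For the failure of the converse, it suffices to exhibit one state $\rho$ and one pair $(L_{1},L_{2})$ where a tensor positive $L_{1}\times L_{2}$ source operator does not extend to a tensor positive $S_{1}\times S_{2}$ source operator. Take $L_{1}=L_{2}=1$, where $T_{1\times1}^{(\rho)}=\rho\geq0$ is automatically tensor positive. Then choose $\rho$ such that no $S_{1}\times S_{2}$ source operator for it is tensor positive.

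I expect the counterexample to be the main obstacle. The intended tool is a Bell-type argument. If a tensor positive $T_{S_{1}\times S_{2}}^{(\rho)}$ existed, then for any $S_{1}\times S_{2}$ projective scenario the products $\mathrm{tr}[T\{E_{1}\otimes\cdots\otimes E_{S_{1}}\otimes F_{1}\otimes\cdots\otimes F_{S_{2}}\}]$ of positive operators would be nonnegative and sum to one. Relation (\ref{1}) makes them marginal-consistent with the quantum probabilities. They would thus form a local joint probability model, so $\rho$ would satisfy all $S_{1}\times S_{2}$ Bell inequalities. A maximally entangled two-qubit state violates CHSH with $S_{1}=S_{2}=2$. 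Hence no $2\times2$ source operator for it is tensor positive, even though its $1\times1$ reduction $\rho$ is. The step needing care is checking that these trace values really define a joint probability distribution with the right marginals. The candidate distribution is indexed by outcome tuples of all settings simultaneously, and its marginals are recovered by summing out the remaining settings using $\sum E=\mathbb{I}$.
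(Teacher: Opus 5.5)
Your proposal is correct and consistent with the paper. The paper gives no in-text argument here and simply cites Lemma 1 of \cite{4}; your partial-trace (identity-insertion) argument is the natural direct proof of the forward direction. Your counterexample to the converse is also sound: $T_{1\times1}^{(\rho)}=\rho$ is always tensor positive, whereas a CHSH-violating state admits no tensor positive $2\times2$ source operator, which is exactly the mechanism of Theorem \ref{theorem-1}(3), so you could cite that directly. To make the counterexample fully concrete, add that a $2\times2$ source operator for that state exists by Propositions 1 of \cite{16,4}. That gives an actual operator whose $1\times1$ reduction is tensor positive while the operator itself is not.
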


\subsection{Tensor positive source operators versus symmetric
quasi-extensions}

As mentioned above, for an arbitrary state $\rho$ on a complex
fnite-dimensional state $\mathcal{H}_{d_{1}}\otimes\mathcal{H}_{d_{2}},$ a
source operator $T_{S_{1}\times S_{2}}^{(\rho)}$ on space $\mathcal{H}%
_{d}^{\otimes S_{1}}\otimes\mathcal{H}_{d}^{\otimes S_{2}}$ exists but does
not need to be either \ invariant under permutations of spaces in blocks
$\mathcal{H}_{d_{1}}^{\otimes S_{i}}$ and $\mathcal{H}_{d_{2}}^{\otimes S_{2}}%
$(see example in Section II.A) or tensor positive. Furthermore, a tensor
positive source operator may be positive, so that it does not need to be an
entanglement witness.

All this implies that: (i) the notion of a $S_{1}\times S_{2}$-setting source
operator \cite{16,4} and the notion of a tensor positive $S_{1}\times S_{2}$-setting source
operator \cite{4} are more general than the notions
\cite{15} of a symmetric $S_{1}\times S_{2}$-extension and a symmetric
$S_{1}\times S_{2}$-quasi-extension of a state $\rho$ introduced in
\cite{15}. The latter constitute particular types of tensor positive
$S_{1}\times S_{2}$-setting source operators for a state $\rho$.

\subsection{Sufficient conditions}

Let $\rho$ be a two-qudit state on a finite-dimensional complex Hilbert space
$\mathcal{H}_{d_{1}}\otimes\mathcal{H}_{d_{2}},$ $d_{i}\geq2,$ $i=1,2,$ and
$S_{1},S_{2}\geq1$ be arbitrary positive integers. The following general
concepts and sufficient conditions are introduced in \cite{3,4}.

\begin{definition}
\label{lhv} A state $\rho$ admits the $S_{1}\times S_{2}$-setting LHV
description if every $S_{1}\times S_{2}$-setting correlation scenario with an
arbitrary type of parties' quantum measurements on state
$\rho$ admits an LHV model not necessarily the same for all $S_{1}\times
S_{2}$-setting scenarios.
\end{definition}

According to Proposition 3 in \cite{3}, if a two-qudit state $\rho$ admits a
$S_{1}\times S_{2}$-setting LHV description, then it also admits $L_{1}\times
L_{2}$-setting LHV description for all $L_{i}\leq S_{i},i=1,2.$

\begin{definition}
\label{Bell-locality} A state $\rho$ is a $S_{1}\times S_{2}$-setting Bell
local if it does not violate \cite{6} any of general $L_{1}\times L_{2}%
$\emph{-setting Bell inequalities} with\emph{\ }$L_{i}\leq S_{i},$
$i=1,2.$\emph{\ }
\end{definition}

By Proposition 6 in \cite{4} the following statements are mutually
equivalent: (i) state $\rho$ admits the $S_{1}\times S_{2}$-setting LHV
description; (ii) state $\rho$ does not violate any of $L_{1}\times L_{2}%
$-setting Bell inequalities where $L_{i}\leq S_{i},i=1,2.$  This and Definitions \ref{lhv}, \ref{Bell-locality} imply.

\begin{proposition}
\label{Lhv-Bell} A state $\rho$ is $S_{1}\times S_{2}$-setting Bell local iff
it admits the $S_{1}\times S_{2}$-setting LHV description.\medskip
\end{proposition}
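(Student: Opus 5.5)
The statement is essentially a direct combination of Proposition 6 of \cite{4} with Definitions \ref{lhv} and \ref{Bell-locality}, so I would prove the two implications separately and check that the quantifiers over settings line up.

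For the direction ``admits the $S_{1}\times S_{2}$-setting LHV description $\Rightarrow$ $S_{1}\times S_{2}$-setting Bell local'', I would first use Proposition 3 of \cite{3}. It says that an $S_{1}\times S_{2}$-setting LHV description yields an $L_{1}\times L_{2}$-setting LHV description for every $L_{i}\leq S_{i}$. Now fix any $L_{1}\times L_{2}$-setting scenario with $L_{i}\leq S_{i}$. Its correlations (joint probability distributions of outcomes, of arbitrary spectral type) are reproduced by an LHV model. Every general $L_{1}\times L_{2}$-setting Bell inequality is, by construction, a linear constraint satisfied by all LHV-modelled correlations of that scenario. Hence $\rho$ satisfies it, which is exactly Definition \ref{Bell-locality}.

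For the converse I would invoke Proposition 6 of \cite{4}, implication (ii)$\Rightarrow$(i). Conceptually, for a fixed $S_{1}\times S_{2}$-setting scenario the set of LHV-modelled correlations is a convex set. General Bell inequalities are the linear functionals that bound this set. A correlation vector satisfying all of them therefore lies in the set, by a separation argument (Hahn--Banach in the general, possibly continuous-outcome, case), and so admits an LHV model. Scenarios with fewer settings, $L_{i}<S_{i}$, are covered because the hypothesis of Definition \ref{Bell-locality} includes all $L_{i}\leq S_{i}$. The model may depend on the scenario, exactly as Definition \ref{lhv} allows.

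The only delicate point is this convex-duality step for outcomes of arbitrary spectral type, which needs closedness of the LHV set in a suitable topology. Since it is already established in Proposition 6 of \cite{4}, I would simply cite it, so the proof reduces to matching the definitions.
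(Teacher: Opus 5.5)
Your proposal is correct and follows the paper's own argument. The paper gets the proposition directly from Proposition 6 of \cite{4} (equivalence of the $S_{1}\times S_{2}$-setting LHV description with non-violation of all $L_{1}\times L_{2}$-setting Bell inequalities, $L_{i}\leq S_{i}$) combined with Definitions \ref{lhv} and \ref{Bell-locality}, and it recalls Proposition 3 of \cite{3} just beforehand; your explicit forward direction and convex-separation sketch unpack that citation, and you rightly defer the one nontrivial point (closedness for outcomes of arbitrary spectral type) to \cite{4}.
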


By specifying for a bipartite case the setting of Proposition 5 in \cite{4} on
$S_{1}\times\cdots\times S_{N}$-Bell locality of an $N$-partite state and
taking into account the above Proposition \ref{Lhv-Bell} we come to the following general
Theorem.\medskip

\begin{theorem}
\label{theorem-1} Let $\rho$ be a quantum state on a complex finite-dimensional
Hilbert space $\mathcal{H}_{d_{1}}\otimes\mathcal{H}_{d_{2}},$ $d_{i}\geq2,$
$i=1,$ and\ $S_{1},S_{2}$ be arbitrary positive integers. \newline(1) If, for state $\rho$, there exists a tensor positive source operator $T_{1\times S_{2}%
}^{(\rho)}$, then this state is $S_{1}\times S_{2}$-setting Bell local\footnote{Note that by Definition \ref{Bell-locality}, if state $\rho$ is $S_{1}\times S_{2}$-setting Bell local then it is also $L_{1}\times L_{2}$-setting Bell local for any $L_{i}\leq S_{i}$.}, 
moreover, is $L_{1}^{\prime}\times S_{2}$-setting Bell local for an arbitrary
number $L_{1}^{\prime}>S_{1}$ of generalized measurements at site $"1"$.
\smallskip\newline(2) If, for  state $\rho$, there exists a tensor positive source
operator $T_{S_{1}\times1}^{(\rho)}$, then this state is $S_{1}\times S_{2}%
$-setting Bell local, moreover, is $S_{1}\times L_{2}^{\prime}$-setting Bell
local for an arbitrary number $L_{2}^{\prime}>S_{2}$ \ of generalized
measurements at site $"2"$.
\smallskip\newline(3) If, for state $\rho$, there exists a
tensor positive source operator $T_{S_{1}\times S_{2}}^{(\rho)},$ then this
state is $S_{1}\times S_{2}$-setting Bell local, moreover, for this state,
there exist tensor positive source operators $T_{1\times S_{2}}^{(\rho)}$ and
$T_{S_{1}\times1}^{(\rho)}$ and the statements of items (1) and (2) hold.
\end{theorem}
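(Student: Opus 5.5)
The plan is to build, from a tensor positive source operator, an explicit LHV model for every correlation scenario with the relevant numbers of settings, and then invoke Proposition \ref{Lhv-Bell}. I would prove item (1) first; item (2) is the same argument with the roles of the two sites exchanged, and item (3) then follows from (1), (2) and Proposition \ref{reduced}.

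\emph{Item (1).} Let $T:=T_{1\times S_{2}}^{(\rho)}\overset{\otimes}{\geq}0$ on $\mathcal{H}_{d_{1}}\otimes\mathcal{H}_{d_{2}}^{\otimes S_{2}}$. Fix an arbitrary $L_{1}'\times S_{2}$-setting scenario (with $L_{1}'$ arbitrary; the case $L_{1}'=S_1$ is included). Site 1 uses POVMs $\{\mathrm{M}_{1}^{(s_1)}(d\lambda_1)\}_{s_1=1}^{L_1'}$ and site 2 uses POVMs $\{\mathrm{M}_{2}^{(s_2)}(d\lambda_2)\}_{s_2=1}^{S_2}$, with outcomes of any spectral type.
\begin{itemize}
\item[\emph{Site 2.}] Use the $S_2$ tensor factors of $T$ as a ``joint measurement'' of all site-$2$ settings. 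Define the operator-valued product measure $\mathrm{M}_2^{(1)}(d\lambda^{(1)}_2)\otimes\cdots\otimes \mathrm{M}_2^{(S_2)}(d\lambda^{(S_2)}_2)$ on $\mathcal{H}_{d_2}^{\otimes S_2}$.
\item[\emph{Site 1.}] Do not split site 1 at all. Instead, let the hidden variable also carry a measure-valued response: for each setting $s_1$, the outcome is distributed conditionally as in quantum theory.
\end{itemize}
Concretely, I would set
\[
\mu(d\lambda_1^{(1)}\times\cdots\times d\lambda_1^{(L_1')}\times d\lambda_2^{(1)}\times\cdots\times d\lambda_2^{(S_2)}) \]
to be built from the scalar measures $\nu_{s_1}(d\lambda_1\times d\lambda_2^{(1)}\cdots d\lambda_2^{(S_2)}):=\mathrm{tr}\bigl[T\{\mathrm{M}_1^{(s_1)}(d\lambda_1)\otimes \mathrm{M}_2^{(1)}(d\lambda_2^{(1)})\otimes\cdots\otimes \mathrm{M}_2^{(S_2)}(d\lambda_2^{(S_2)})\}\bigr]$. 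These are nonnegative on product sets precisely because $T$ is tensor positive, and this is the key use of Definition \ref{tensor-positive}: a product of positive operators in each factor gives a nonnegative trace. Since $\mathrm{M}_1^{(s_1)}$ is normalized, all the $\nu_{s_1}$ share the same marginal $\pi$ on the site-$2$ outcomes. Disintegrate each $\nu_{s_1}$ with respect to $\pi$, and glue the conditional distributions of $\lambda_1^{(s_1)}$ as conditionally independent given $(\lambda_2^{(1)},\dots,\lambda_2^{(S_2)})$. This gives a joint probability measure with the correct marginals.

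Relation (\ref{1}) then gives that the $(s_1,s_2)$ marginal equals $\mathrm{tr}[\rho\{\mathrm{M}_1^{(s_1)}\otimes \mathrm{M}_2^{(s_2)}\}]$. That is an LHV model for every $L_1'\times S_2$ scenario. Since every $L_1\times L_2$ scenario with $L_i\le S_i$ (or $L_1\le L_1'$) embeds into such a scenario, Proposition \ref{Lhv-Bell} yields $L_1'\times S_2$-setting Bell locality, and in particular $S_1\times S_2$-setting Bell locality.

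\emph{Item (3).} Given $T_{S_1\times S_2}^{(\rho)}\overset{\otimes}{\ge}0$, take partial traces over $S_1-1$ site-$1$ factors, or over $S_2-1$ site-$2$ factors. By the invariance noted after (\ref{1}), and by the remark that reductions are source operators, these partial traces are source operators $T_{1\times S_2}^{(\rho)}$ and $T_{S_1\times1}^{(\rho)}$. Proposition \ref{reduced} says they are tensor positive. Items (1) and (2) then apply.

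\emph{Main obstacle.} The main obstacle is the measure-theoretic part of item (1). One must show that the set function $\nu_{s_1}$, defined on rectangles, extends to a genuine probability measure on the product of general (possibly continuous) outcome spaces. One must also handle the disintegration. First I would try the standard route: the set function is finitely additive and nonnegative on rectangles, and it is countably additive in each argument separately because the POVMs are countably additive in the weak operator sense and the space is finite dimensional. This lets one extend to the product $\sigma$-algebra. I would assume standard Borel outcome spaces to guarantee that regular conditional distributions exist. If that becomes technical, the alternative is to cite the corresponding general construction in \cite{4}, i.e.\ Proposition 5 there, which this theorem specializes.
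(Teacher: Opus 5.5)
\paragraph{Comparison with the paper.} The paper gives no argument of its own for this theorem. It obtains it by specializing Proposition 5 of Ref.~[4] to two parties and combining it with Proposition~\ref{Lhv-Bell}. Item (3) then follows from Proposition~\ref{reduced}, exactly as in your last paragraph. Your explicit construction is a legitimate self-contained replacement for that citation, and the right one:
\begin{itemize}
\item the joint outcome $\vec\lambda_2=(\lambda_2^{(1)},\dots,\lambda_2^{(S_2)})$ of the product measurement on the $S_2$ copies serves as the hidden variable;
\item the site-1 outcomes are glued conditionally independently along the common marginal $\pi$;
\item tensor positivity gives nonnegativity, relation (\ref{1}) gives the $(s_1,s_2)$-marginals, and padding handles $L_i\le S_i$.
\end{itemize}
It also shows directly why the number of site-1 settings does not matter.

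\paragraph{Where the argument fails as written.} The measure-theoretic step does not go through as you justify it. Your fallback (standard Borel outcome spaces) proves a weaker statement than the theorem, which allows outcomes of arbitrary type, i.e.\ arbitrary measurable spaces.

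Nonnegativity on rectangles plus countable additivity in each argument separately does \emph{not} imply extendability to the product $\sigma$-algebra: positive bimeasures need not extend. For example, take $E\subset[0,1]$ of full outer and zero inner Lebesgue measure and set $\beta(A\cap E,B\cap E^{c}):=\mathrm{Leb}(A\cap B)$ on $E\times E^{c}$. Any extension would have to give full mass to the diagonal, which does not meet $E\times E^{c}$. Separately, regular conditional distributions need not exist outside standard Borel spaces, so the disintegration step also fails in general.

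\paragraph{The fix.} The missing idea is to use finite dimensionality quantitatively.
\begin{itemize}
\item Every POVM $\mathrm{M}$ on $\mathcal{H}_d$ has a positive operator density with respect to $\tau:=\mathrm{tr}[\mathrm{M}(\cdot)]/d$: $\mathrm{M}(B)=\int_B m(\lambda)\,\tau(d\lambda)$ with $m\geq0$. Hence the product POVM $\mathrm{M}_2(C)$ on $\mathcal{H}_{d_2}^{\otimes S_2}$ exists on the product $\sigma$-algebra.
\item Define $\Sigma(C):=\mathrm{tr}_{\mathcal{H}_{d_2}^{\otimes S_2}}[T\{\mathbb{I}_{d_1}\otimes\mathrm{M}_2(C)\}]$. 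This is an operator-valued measure on $\mathcal{H}_{d_1}$, positive by tensor positivity applied pointwise to the densities, with $\|\Sigma(C)\|\leq\mathrm{tr}[\Sigma(C)]=\pi(C)$, so $\Sigma\ll\pi$.
\item The matrix Radon--Nikodym theorem gives $\Sigma(d\vec\lambda_2)=\sigma(\vec\lambda_2)\,\pi(d\vec\lambda_2)$. After a modification on a $\pi$-null set, $\sigma(\vec\lambda_2)$ is a density operator for every $\vec\lambda_2$. Moreover $\nu_{s_1}(B\times C)=\int_C\mathrm{tr}[\sigma(\vec\lambda_2)\mathrm{M}_1^{(s_1)}(B)]\,\pi(d\vec\lambda_2)$.
\item Hence $K_{s_1}(B|\vec\lambda_2):=\mathrm{tr}[\sigma(\vec\lambda_2)\mathrm{M}_1^{(s_1)}(B)]$ is a genuine probability kernel on an arbitrary site-1 outcome space, and
\[
\mu:=\prod_{s_1=1}^{L_1'}K_{s_1}(d\lambda_1^{(s_1)}|\vec\lambda_2)\;\pi(d\vec\lambda_2)
\]
is a well-defined probability measure with the required marginals.
\end{itemize}
This needs no bimeasure extension theorem, no disintegration theorem and no topological assumption on the outcome spaces. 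Site 1 simply responds through the local state $\sigma(\vec\lambda_2)$.
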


As we point out above in Section II.C, the term "a tensor positive source
operator" includes the notion of a symmetric quasi-extension \cite{15} only as
a particular case. Due to this and Definitions 3,4, the setting of the above
Theorem 1 is more general than the settings of Theorems 1, 2 in \cite{15}
formulated in terms of symmetric quasi-extensions.

\section{Bell locality of nonseparable Werner states}

In this Section, we present bounds on $S_{1}\times S_{2}$-setting Bell
locality of Werner states \cite{10} on $\mathcal{H}_{d}\otimes
\mathcal{H}_{d}$. These states have the form 
\begin{equation}
W_{d,\Phi}=\frac{1+\Phi}{2}\frac{\mathrm{P}_{d,2}^{(+)}}{r_{d}^{(+)}}%
+\frac{1-\Phi}{2}\frac{\mathrm{P}_{d,2}^{(-)}}{r_{d}^{(-)}},\text{ \ \ }%
\Phi\in\lbrack-1,1], \label{4}%
\end{equation}
where $\mathrm{P}_{d,2}^{(\pm)}$ are the orthogonal projections onto the
symmetric (plus sign) and antisymmetric (minus sign) subspaces of
$\mathcal{H}_{d}^{\otimes2}$ of dimensions $r_{d}^{(\pm)}=\frac{d(d\pm1)}{2}.$
A Werner state $W_{d,\Phi}$ is separable iff parameter $\Phi
\in\lbrack0,1]$ and nonseparable otherwise. 

As proved by Werner \cite{10},
under any bipartite correlation scenario with an arbitrary number of projective
measurements at each of sites, a nonseparable Werner state $W_{d,\Phi}$
admits the LHV model constructed in \cite{10} (thus, Bell local) if
\begin{equation}
\Phi\in\lbrack-\gamma_{pm}^{(W)},0\text{ }),\ \ \ \ \ \gamma_{pm}%
^{(W)}(d):=1-\frac{d+1}{d^{2}}. \label{5}%
\end{equation}

In 2002, Barrett \cite{11} presented another LHV model for Werner states and
proved that, under a correlation scenario with generalized measurements, a
nonseparable Werner state $W_{d,\Phi}$ admits this model (thus, Bell local)
if
\begin{equation}
\Phi\in\lbrack-\gamma_{gm}^{(B)},0)\text{, \ \ }\gamma_{gm}^{(B)}%
(d):=\frac{3d-1}{d^{2}}\left(\frac{d-1}{d}\right)^{d-1}-\frac{1}{d}.
\label{6}%
\end{equation}
Note that the locality bounds in (\ref{5}) and in (\ref{6}), satisfying relation
$\gamma_{pm}^{(W)}(d)$ $>\gamma_{gm}^{(B)}(d)$ for all $d\geq2$, behave
themselves quite differently with the growth of a dimension $d$ $-$ the Werner
bound $\gamma_{pm}^{(W)}(d)$ is monotonically increasing $\gamma
_{pm}^{(W)}(d)\uparrow1$, while The Barrett bound $\gamma_{gm}^{(B)}$ is
monotonically decreasing $\gamma_{gm}^{(B)}(d)$ $\downarrow0$.

For our further construction of $1\times S_{2}$--setting and $S_{1}\times
1$--setting source operators (Definition \ref{source-operator}) for a Werner state
$W_{d,\Phi}$ and the analysis of their tensor positivity (Definition \ref{tensor-positive})
in order to apply the sufficient conditions of Theorem \ref{theorem-1}, we
recall \cite{19,20} the main three mutually orthogonal subspaces of space
$\mathcal{H}_{d}^{\otimes m}$, $d\geq2, m\geq2.$

\emph{The\ fully symmetric subspace of }$\mathcal{H}_{d}^{\otimes m}$ -- the
subspace of vectors in $\mathcal{H}_{d}^{\otimes m}$ which are invariant under
all permutations of tensor factors$.$ The projection $\mathrm{P}_{d,m}^{(+)}$
onto this subspace and its rank are given by\footnote{For
$m=3,$ the explicit expressions for $\mathrm{P}_{d,3}^{(\pm)}$\ are presented,
for example, in \cite{17}.}%
\begin{equation}
\mathrm{P}_{d,m}^{(+)}=\frac{1}{m!}\sum_{\sigma\in\mathfrak{S}m}P_{\sigma
},\text{ \ \ }\mathrm{tr}[\text{ }\mathrm{P}_{d,m}^{(+)}]=\binom{d+m-1}%
{m},\label{7}%
\end{equation}
where $\mathfrak{S}_{m}$ is the symmetric group of permutations on set
$\{1,...,m\}$ and $P_{\sigma}$ is the permutation operator on $\mathcal{H}%
_{d}^{\otimes m}$ defined by
\begin{align}
P_{\sigma}\left(  |x_{1}\rangle\otimes\cdots|x_{m}\rangle\right)    &
:=|x_{\sigma^{-1}(1)}\rangle\otimes\cdots\otimes|x_{\sigma^{-1}(m)}%
\rangle,\label{8}\\
|x_{i}\rangle & \in\mathcal{H}_{d}.\nonumber
\end{align}

\emph{The fully antisymmetric subspace of }$\mathcal{H}_{d}^{\otimes m}$ --
the subspace of vectors in $\mathcal{H}_{d}^{\otimes m}$ changing under a
permutation $\sigma\in\mathfrak{S}_{m}$  of tensor factors according to the
sign $\mathrm{sgn(}\sigma)$ of this permutation. The projection $\mathrm{P}%
_{d,m}^{(-)}$ onto this subspace and its rank 
are given by%
\begin{equation}
\mathrm{P}_{d,m}^{(-)}=\frac{1}{m!}\sum_{\sigma\in\mathfrak{S}m}%
\mathrm{sgn(}\sigma)P_{\sigma},\text{ \ \ }\mathrm{tr}[\mathrm{P}_{d,m}%
^{(-)}]=\binom{d}{m}\text{.}\label{9}%
\end{equation}
The dimension of this subspace is nonzero if $d\geq m$ and zero otherwise.

\emph{The subspace of mixed symmetry} with projection $\mathrm{P}%
_{d,m}^{(mix)}$. The dimension of this subspace is nonzero for $m\geq3.$ 

The above 
three projections are mutually orthogonal and sum up to identity on
$\mathcal{H}_{d}^{\otimes m}$:%
\begin{equation}
\mathrm{P}_{d,m}^{(+)}\,\,+\,\mathrm{P}_{d,m}^{(-)}\,\,+\,\mathrm{P}%
_{d,m}^{(mix)}=\,\mathbb{I}_{d}^{\otimes m}. \label{10}%
\end{equation}

For an orthogonal projection $\mathrm{P}_{d,m}$ on $\mathcal{H}_{d}^{\otimes
m}$, with the properties specified below, we have the following general statement.

\begin{lemma} \label{lemma-first} Let $\mathrm{P}_{d,m}$ be an orthogonal projection on
$\mathcal{H}_{d}^{\otimes m},$ $m\geq3,$ for which the partial trace $Q_{1j}$
over a selection of $(m-2)$ spaces $\mathcal{H}_{d}$, standing in
$\mathcal{H}_{d}^{\otimes m}$ at places in ${\{2,\ldots,m\}\setminus\{j\}}$, 
$\forall j\geq2,$ is independent on a choice of this selection and is $U_{d}\otimes U_{d}$ invariant
\cite{11} for any unitary $U_{d}$ on $\mathcal{H}_{d}$. Then
\begin{align}
Q_{1j}  &  :=\mathrm{tr}_{\{2,\ldots,m\}\setminus\{j\}}[\mathrm{P}_{d,m}%
]=\alpha_{_{\mathrm{P}_{d,m}}}^{(+)}\mathrm{P}_{d,2}^{(+)}+\alpha
_{_{\mathrm{P}_{d,m}}}^{(-)}\mathrm{P}_{d,2}^{(-)},\label{10-1}\\
\alpha_{_{\mathrm{P}_{d,m}}}^{(\pm)}  &  =\frac{\mathrm{tr}[\mathrm{P}_{d,m}%
]}{d(d\pm1)}\left(  1\pm\tau_{_{\mathrm{P}_{d,m}}}\right)  . \label{10-2}%
\end{align}
where $\mathrm{P}_{d,2}^{(\pm)}$ are orthogonal projections onto the symmetric
and antisymmetric subspaces of $\ \mathcal{H}_{d}^{\otimes2}$ and
\begin{align}
\tau_{_{\mathrm{P}_{d,m}}}  &  :=\frac{1}{m-1}\frac{\mathrm{tr}[J_{d,m}%
\mathrm{P}_{d,m}]}{\mathrm{tr}[\mathrm{P}_{d,m}]},\label{10-3}\\
J_{d,m}  &  :=\sum_{j=2}^{m}F_{1j},\text{ \ \ }\left\vert \tau_{_{\mathrm{P}%
_{d,m}}}\right\vert \leq1.\nonumber
\end{align}
Here, each $F_{1j}$ is the flip operator on $\mathcal{H}_{d}^{\otimes m},$
swapping the first and $j$-th tensor factors.
\end{lemma}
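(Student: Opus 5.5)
The plan is to combine the structure of $U_d\otimes U_d$-invariant operators on $\mathcal{H}_d^{\otimes 2}$ with two trace identities. By the commutant theorem for $U\otimes U$ (Werner's twirling argument), every operator on $\mathcal{H}_d\otimes\mathcal{H}_d$ that is invariant under all $U_d\otimes U_d$ is a linear combination of $\mathbb{I}_d^{\otimes 2}$ and the flip $F$. Equivalently, it is a combination of $\mathrm{P}_{d,2}^{(+)}=\frac{1}{2}(\mathbb{I}+F)$ and $\mathrm{P}_{d,2}^{(-)}=\frac{1}{2}(\mathbb{I}-F)$. Since $Q_{1j}$ is assumed $U_d\otimes U_d$ invariant, this gives at once $Q_{1j}=\alpha^{(+)}\mathrm{P}_{d,2}^{(+)}+\alpha^{(-)}\mathrm{P}_{d,2}^{(-)}$ with real coefficients, because $Q_{1j}$ is Hermitian as a partial trace of a projection. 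The independence assumption says these coefficients do not depend on $j$. This establishes (\ref{10-1}), and only the two scalars remain to be identified.

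To find them I will use two linear functionals. First, taking the full trace gives $\mathrm{tr}[Q_{1j}]=\mathrm{tr}[\mathrm{P}_{d,m}]=\alpha^{(+)}r_d^{(+)}+\alpha^{(-)}r_d^{(-)}$. Second, I pair with the flip: $\mathrm{tr}[FQ_{1j}]=\mathrm{tr}[F_{1j}\mathrm{P}_{d,m}]$. This holds because $F$ acting on factors $1,j$, tensored with the identity on the traced factors, is exactly $F_{1j}$. Using $F\mathrm{P}^{(\pm)}_{d,2}=\pm\mathrm{P}^{(\pm)}_{d,2}$, the left side equals $\alpha^{(+)}r_d^{(+)}-\alpha^{(-)}r_d^{(-)}$. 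The left side is independent of $j$, so averaging over $j=2,\dots,m$ gives $\mathrm{tr}[F_{1j}\mathrm{P}_{d,m}]=\frac{1}{m-1}\mathrm{tr}[J_{d,m}\mathrm{P}_{d,m}]=\tau_{\mathrm{P}_{d,m}}\mathrm{tr}[\mathrm{P}_{d,m}]$. Solving the resulting $2\times2$ system gives $\alpha^{(\pm)}r_d^{(\pm)}=\frac{1}{2}\mathrm{tr}[\mathrm{P}_{d,m}](1\pm\tau)$. Dividing by $r_d^{(\pm)}=d(d\pm1)/2$ yields (\ref{10-2}).

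For the bound $|\tau|\le1$, note that each $F_{1j}$ is a self-adjoint unitary, so $-\mathbb{I}\le F_{1j}\le\mathbb{I}$. Because $\mathrm{P}_{d,m}\ge0$, this gives $|\mathrm{tr}[F_{1j}\mathrm{P}_{d,m}]|\le\mathrm{tr}[\mathrm{P}_{d,m}]$. Summing over the $m-1$ terms and dividing gives the claim, assuming $\mathrm{P}_{d,m}\neq0$, which is implicit in the definition of $\tau$. Alternatively, the bound follows from positivity of $Q_{1j}$, which forces $\alpha^{(\pm)}\ge0$.

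The only nontrivial ingredient is the commutant characterization of $U\otimes U$-invariant operators. I would cite it as standard (Werner \cite{10}, or Schur--Weyl duality) rather than reprove it. If a self-contained argument is wanted, I would twirl: averaging over the Haar measure maps any operator $X$ to $\frac{\mathrm{tr}[X\mathrm{P}^{(+)}]}{r^{(+)}}\mathrm{P}^{(+)}+\frac{\mathrm{tr}[X\mathrm{P}^{(-)}]}{r^{(-)}}\mathrm{P}^{(-)}$, and an invariant $X$ is fixed by this map. Everything else is a direct linear computation.
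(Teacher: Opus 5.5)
Your proposal is correct and follows essentially the same route as the paper. You use the $U_d\otimes U_d$ commutant to get the two-term form of $Q_{1j}$. You then fix $\alpha^{(\pm)}$ from the two linear conditions $\mathrm{tr}[Q_{1}]=\mathrm{tr}[\mathrm{P}_{d,m}]$ and $\mathrm{tr}[F_{1j}\mathrm{P}_{d,m}]=\mathrm{tr}[V_{12}Q_{1}]$, summed over $j$ to produce $\mathrm{tr}[J_{d,m}\mathrm{P}_{d,m}]$. Finally you bound $|\tau_{\mathrm{P}_{d,m}}|$ using $-\mathbb{I}\le F_{1j}\le\mathbb{I}$, which is the paper's estimate $\|J_{d,m}\|_0\le m-1$ applied term by term. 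Your alternative argument for $|\tau_{\mathrm{P}_{d,m}}|\le1$, that $Q_{1j}\ge0$ forces $\alpha^{(\pm)}\ge0$, is also valid.
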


\begin{proof}
By the setting $Q_{1j}=Q_{1k}=Q_{1},$ $\forall j,k\geq2.$ Also, $U\otimes
U)Q_{1j}(U\otimes U)^{\dagger}=Q_{1j}.$ The latter property implies
\cite{10}
\begin{equation}
Q_{1j}=\alpha_{_{\mathrm{P}_{d,m}}}^{(+)}\mathrm{P}_{d,2}^{(+)}+\alpha
_{_{\mathrm{P}_{d,m}}}^{(-)}\mathrm{P}_{d,2}^{(-)}:=Q_{1},\text{ \ \ }\forall
j\geq2, \label{10-31}%
\end{equation}
i. e. the decomposition (\ref{10-1}). Let us calculate $\mathrm{tr}[J_{1,m}%
\mathrm{P}_{d,m}],$ where $J_{d,m}=\sum_{j=2}^{m}F_{1j}$ and $F_{1j}$ is the
flip operator on $\mathcal{H}_{d}^{\otimes m}$ swapping the first and
$j$-th tensor factors. By (\ref{10-31}) and the structure of operator $J_{d,m},$ we have
\begin{align}
\mathrm{tr}[J_{d,m}\mathrm{P}_{d,m}]  &  =\sum_{j=2}^{m}\mathrm{tr}%
[F_{1j}\mathrm{P}_{d,m}]=\sum_{j=2}^{m}\mathrm{tr}[V_{1j}\mathrm{Q}%
_{1j}]\label{10-32}\\
&  =(m-1)\mathrm{tr}[V_{12}\mathrm{Q}_{1}],\nonumber
\end{align}
where $V_{12}$ is the flip operator on $\mathcal{H}_{d}\otimes\mathcal{H}%
_{d}.$ Calculating further $\mathrm{tr}[V_{12}\mathrm{Q}_{1}]$ and the rank
$\mathrm{tr}[\mathrm{P}_{d,m}]$ of projection $\mathrm{P}_{d,m}$, we come to
the relations
\begin{align}
\mathrm{tr}[\mathrm{P}_{d,m}]  &  =\alpha_{_{\mathrm{P}_{d,m}}}^{(+)}%
\frac{d(d+1)}{2}+\alpha_{_{\mathrm{P}_{d,m}}}^{(-)}\frac{d(d-1)}%
{2},\label{10-4}\\
\mathrm{tr}[J_{d,m}\mathrm{P}_{d,m}]  &  =(m-1)\left(  \alpha_{_{\mathrm{P}%
_{d,m}}}^{(+)}\frac{d(d+1)}{2}-\alpha_{_{\mathrm{P}_{d,m}}}^{(-)}\frac
{d(d-1)}{2}\right)  ,\nonumber
\end{align}
which imply%
\begin{equation}
\alpha_{_{\mathrm{P}_{d,m}}}^{(\pm)}=\frac{1}{d(d\pm1)}\left(  \mathrm{tr}%
[\mathrm{P}_{d,m}]\pm\frac{1}{m-1}\mathrm{tr}[J_{d,m}\mathrm{P}_{d,m}]\right)  ,
\label{10-6}%
\end{equation}
proving (\ref{10-2}), (\ref{10-3}). The upper bound $\left\vert \tau
_{_{\mathrm{P}_{d,m}}}\right\vert \leq1$ follows from the relation $\left\vert
\mathrm{tr}[AB]\right\vert \leq||A||_{0}\mathrm{tr}[B]$ valid for any positive
operator $B$ and any Hermitian operator $A$ ($||\cdot||_{0}$ means
the operator norm). For the Hermitian operator $\left\Vert J_{d,m}\right\Vert
_{0}\leq(m-1).$
\end{proof}
Note that if $\mathrm{P}_{d,m}=\mathrm{P}_{\lambda(J_{d,m})}$ is the orthogonal
projection on the invariant subspace of the Hermitian operator $J_{d,m},$
corresponding to its  eigenvalue $\lambda,$ then it satisfies the setting of
Lemma \ref{lemma-first} and for this projection the parameter $\tau_{_{\mathrm{P}_{\lambda(J_{d,m})}}%
}=\frac{\lambda}{m-1}.$

For the ortogonal projections $\mathrm{P}_{d,m}^{(\pm)}$ onto the symmetric
and antisymmetric subspaces specified above we have
\begin{align}
\frac{1}{m-1}\mathrm{tr}[J_{d,m}\mathrm{P}_{d,m}^{(\pm)}]  &  =\pm
\mathrm{tr}[\mathrm{P}_{d,m}^{(\pm)}],\text{ \ \ }\tau_{_{\mathrm{P}%
_{d,m}^{(\pm)}}}=\pm1,\label{10-8}\\
\alpha_{_{\mathrm{P}_{d,m}^{(+)}}}^{(+)}  &  =\frac{2}{d(d+1)}\mathrm{tr}%
[\mathrm{P}_{d,m}^{(+)}],\text{ \ \ }\alpha_{_{\mathrm{P}_{d,m}^{(+)}}}%
^{(-)}=0,\nonumber\\
\alpha_{_{\mathrm{P}_{d,m}^{(-)}}}^{(+)}  &  =0,\text{ \ \ \ \ }%
\alpha_{_{\mathrm{P}_{d,m}^{(-)}}}^{(-)}=\frac{2}{d(d-1)}\mathrm{tr}%
[\mathrm{P}_{d,m}^{(-)}].\nonumber
\end{align}

In view of Definition \ref{Bell-locality}, Theorem \ref{theorem-1} and Lemma \ref{lemma-first}, we proceed to prove the following general statement.

\begin{theorem}
\label{theorem-2} Let $W_{d,\Phi}$ be a nonseparable Werner state on
$\mathcal{H}_{d}\otimes\mathcal{H}_{d},$ $d\geq2,$ and $S_{1},S_{2}\geq1$ be
arbitrary positive integers with\footnote{Any correlation scenario with
settings $1\times S_{2}$ or $S_{1}\times1,$ that is, with $S_{\min}=1,$ admits
the LHV description, see Proposition 2 in \cite{3}.} $S_{\min}:=\min
\{S_{1},S_{2}\}\geq2$ and $S_{\max}:=\max\{S_{1},S_{2}\}$. The following
statements hold.\smallskip\newline(a) For any $2\leq d\leq S_{\min},$ a
nonseparable Werner state $W_{d,\Phi}$ is $S_{1}\times S_{2}$-setting Bell
local if
\begin{equation}
\Phi\in\lbrack-\frac{d-1}{S_{\min}},0), \label{10"}%
\end{equation}
moreover, is $S_{1}\times L_{2}^{\prime}$-Bell local for all $L_{2}^{\prime
}>S_{2}$ and $L_{1}^{^{\prime}}\times S_{2}$-setting Bell local for all
$L_{1}^{\prime}>S_{1}$.\smallskip\newline(b) For $S_{\min}<d\leq S_{\max}$,
every nonseparable Werner state $W_{d,\Phi},$ $\Phi\in\lbrack-1,0),$ is
$S_{1}\times S_{2}$-setting Bell local, moreover, is $S_{1}\times
L_{2}^{\prime}$-setting Bell local for all $L_{2}^{\prime}>S_{2}$ if
$S_{\min}=S_{1}$ and $L_{1}^{\prime}\times S_{2}$-setting Bell local for all
$L_{1}^{\prime}>S_{1},$ if $S_{\min}=S_{2}.$\smallskip\newline(c) If
$d>S_{\max},$ every nonseparable $W_{d,\Phi},$ $\Phi\in\lbrack-1,0),$ is
$S_{1}\times S_{2}$-setting Bell local, moreover, is $S_{1}\times
L_{2}^{\prime}$-setting Bell local for all $L_{2}^{\prime}>S_{2}$ and
$L_{1}^{\prime}\times S_{2}$-setting Bell local for all $L_{1}^{\prime}%
>S_{1}$.
\end{theorem}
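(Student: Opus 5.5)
The plan is to use Theorem \ref{theorem-1}: for each case I would construct explicit \emph{positive}, hence tensor positive, source operators $T_{1\times S}^{(W_{d,\Phi})}$ on $\mathcal{H}_d\otimes\mathcal{H}_d^{\otimes S}$, and by the symmetry of $W_{d,\Phi}$ under the flip also $T_{S\times 1}^{(W_{d,\Phi})}$, with $S\in\{S_1,S_2\}$. Put $m=S+1$ and let the first factor of $\mathcal{H}_d^{\otimes m}$ be site ``1''. Every orthogonal projection $\mathrm{P}_{d,m}$ satisfying the hypotheses of Lemma \ref{lemma-first} gives the normalized operator $\mathrm{P}_{d,m}/\mathrm{tr}[\mathrm{P}_{d,m}]$. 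By (\ref{10-1})--(\ref{10-2}), its reduction to factors $1,j$ is independent of $j$ and equals $\frac{1+\tau}{2}\frac{\mathrm{P}^{(+)}_{d,2}}{r_d^{(+)}}+\frac{1-\tau}{2}\frac{\mathrm{P}^{(-)}_{d,2}}{r_d^{(-)}}=W_{d,\tau}$ with $\tau=\tau_{\mathrm{P}_{d,m}}$. Such an operator is therefore a positive $1\times S$ source operator for $W_{d,\tau}$. Convex mixtures of source operators for $W_{d,\tau}$ and $W_{d,1}$ (the latter from $\mathrm{P}^{(+)}_{d,m}$, with $\tau=1$) give positive source operators for every $W_{d,\Phi}$ with $\Phi\in[\tau,1]$. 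So it suffices to find, for each case, the smallest attainable $\tau$.

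For (b) and (c), take $S=S_{\min}<d$, or any $S<d$. Then $m=S+1\le d$, so $\mathrm{P}^{(-)}_{d,m}\neq0$, and by (\ref{10-8}) it has $\tau=-1$. Mixing it with $\mathrm{P}^{(+)}_{d,m}$ covers all $\Phi\in[-1,0)$. Theorem \ref{theorem-1}(1) or (2), applied on the side where $S_i<d$, then gives $S_1\times S_2$-locality together with the ``moreover'' claims for arbitrary numbers of settings at the other site. In case (c) both $S_1,S_2<d$, so both $T_{1\times S_2}$ and $T_{S_1\times 1}$ are available and both ``moreover'' claims hold. In case (b) only the $S_{\min}$ side is available, which matches the asymmetric statement.

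For (a), where $d\le S_{\min}$, the antisymmetric subspace of $\mathcal{H}_d^{\otimes(S+1)}$ vanishes, and I would instead take $\mathrm{P}_{d,m}=\mathrm{P}_{\lambda(J_{d,m})}$, the eigenprojection of $J_{d,m}$ for its minimal eigenvalue. By the remark after Lemma \ref{lemma-first}, this projection satisfies the lemma's hypotheses. Indeed $J_{d,m}$ commutes with $U^{\otimes m}$ and with permutations of factors $2,\dots,m$, so the reduced operators are independent of $j$ and $U\otimes U$ invariant. Moreover $\tau=\lambda_{\min}/(m-1)=\lambda_{\min}/S$.

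The main step is computing $\lambda_{\min}$. Up to relabeling, $J_{d,m}$ is the Jucys--Murphy element, the sum of the transpositions between one fixed factor and all the others. Through Schur--Weyl duality its eigenvalues on $\mathcal{H}_d^{\otimes m}$ are the contents $c-r$ of the box added in passing from a Young diagram with $m-1$ boxes to one with $m$ boxes, where both diagrams have at most $d$ rows. The minimal content is $-(d-1)$, attained by adding a box in row $d$ (column 1). This requires a diagram with $m-1=S\ge d-1$ boxes and fewer than $d$ rows, which exists since $d\le S$.

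Hence $\tau=-(d-1)/S$, and mixing with $\mathrm{P}^{(+)}_{d,m}$ gives positive source operators for every $\Phi\in[-(d-1)/S,0)$. Taking $S=S_{\min}$ on the corresponding side and applying Theorem \ref{theorem-1} yields (\ref{10"}). For the ``moreover'' claims on both sides I would apply the same construction with $S$ equal to each of $S_1,S_2$. The main obstacle is justifying $\lambda_{\min}=-(d-1)$ rigorously. If the representation-theoretic argument needs to be avoided, I would fall back on a direct check: evaluate $J_{d,m}$ on the vector antisymmetrized over factors $1,\dots,d$ and symmetrized appropriately over the rest, which gives the upper bound $\lambda_{\min}\le-(d-1)$, and that bound is all the construction requires.
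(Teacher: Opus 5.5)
Your construction is the paper's: a nonnegative combination of $\mathrm{P}^{(+)}_{d,S+1}$ with the eigenprojection of $J_{d,S+1}$ for $\lambda_{\min}=-\min\{d-1,S\}$. That eigenprojection equals $\mathrm{P}^{(-)}_{d,S+1}$ when $d>S$. Your convex-mixture bookkeeping is equivalent to the paper's sign analysis of $\gamma,\xi$; it rests on $\mathrm{P}/\mathrm{tr}[\mathrm{P}]$ being a positive source operator for $W_{d,\tau}$. Your content (Jucys--Murphy) argument for $\lambda_{\min}$ is correct and more self-contained than the paper, which only cites the literature. This proves $S_1\times S_2$-locality in (a)--(c) and the ``moreover'' claims of (b) and (c).

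\textbf{The gap is the ``moreover'' claim of (a) when $S_1\neq S_2$.} Say $S_1<S_2$ and $d\le S_1$.
\begin{itemize}
\item $T_{S_1\times 1}$ is positive on $[-\frac{d-1}{S_1},1]$, so it gives $S_1\times L_2'$-locality on the whole interval.
\item Your construction with $S=S_2$ gives a positive $T_{1\times S_2}$ only for $\Phi\ge-\frac{d-1}{S_2}$.
\item So ``$L_1'\times S_2$-local for all $L_1'>S_1$'' is not established for $\Phi\in[-\frac{d-1}{S_1},-\frac{d-1}{S_2})$.
\end{itemize}

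This cannot be repaired with any positive source operator. Suppose $T\ge 0$ is a $1\times S_2$ source operator for $W_{d,\Phi}$. Each marginal on factors $1,j$ is $W_{d,\Phi}$, and $\mathrm{tr}[V_{12}W_{d,\Phi}]=\Phi$. Hence $S_2\Phi=\mathrm{tr}[T\,J_{d,S_2+1}]\ge\lambda_{\min}\,\mathrm{tr}[T]=-(d-1)$. Closing the gap would need a tensor positive but non-positive source operator, and neither you nor the paper supplies one.

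The paper's proof makes the same slip. It asserts that both $T_{1\times S_2}$ and $T_{S_1\times 1}$ are positive for $\Phi\in[-\frac{d-1}{S_{\min}},0)$. But the intersection of $[-\frac{d-1}{S_1},1]$ and $[-\frac{d-1}{S_2},1]$ is only $[-\frac{d-1}{S_{\max}},1]$. What the argument actually yields in (a) is:
\begin{itemize}
\item on $[-\frac{d-1}{S_{\min}},0)$, a one-sided ``moreover'' as in (b), namely $S_{\min}$ settings at the $S_{\min}$ site and arbitrarily many at the other;
\item the other-side claim only on $[-\frac{d-1}{S_{\max}},0)$.
\end{itemize}
For $S_1=S_2$ the statement is fully proved.

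A minor point: your fallback for $\lambda_{\min}$ fails if the test vector is a product $a\otimes s$ with $a$ antisymmetric on factors $1,\dots,d$. Then $\rho^{(a)}_1=\mathbb{I}_d/d$, so $\langle F_{1j}\rangle=\frac{1}{d}$ for $j>d$. The Rayleigh quotient is then $-(d-1)+\frac{S+1-d}{d}>-(d-1)$ whenever $S\ge d$. Keep the content argument.
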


\begin{proof}
For the construction of a $1\times S_{2}$-setting source operator for a
nonseparable Werner state $W_{d,\Phi}$, consider the following Hermitian
operator on space $\mathcal{H}_{d}\otimes\mathcal{H}_{d}^{\otimes S_{2}}:$%
\begin{equation}
\gamma\mathrm{P}_{d,S_{2}+1}^{(+)}+\xi\mathrm{P}_{d,S_{2}+1},\text{
\ \ }\gamma,\xi\in\mathbb{R},\label{11_}%
\end{equation}
where $\mathrm{P}_{d,S_{2}+1}^{(+)}$ is the projection (\ref{7}) on the fully
symmetric subspace of $\mathcal{H}_{d}^{\otimes(S_{2}+1)}$ and $\mathrm{P}%
_{d,S_{2}+1}$ is some projection on $\mathcal{H}_{d}^{\otimes(S_{2}+1)}$ with
the properties specified in Lemma \ref{lemma-first}. By (\ref{7}) and (\ref{10-1})
we have the relations%
\begin{align}
\mathrm{tr}_{\{2,\ldots,S_{2}+1\}\setminus\{j\}}[\mathrm{P}_{d,S_{2}%
+1}^{(+)}] &  =\binom{d+S_{2}}{S_{2}+1}\,\frac{\mathrm{P}_{d,2}^{(+)}}%
{r_{d}^{(+)}},\label{12}\\
\mathrm{tr}_{\{2,\ldots,S_{2}+1\}\setminus\{j\}}[\mathrm{P}_{d,S_{2}+1}]
&  =\alpha_{_{\mathrm{P}_{d,S_{2}+1}}}^{(+)}\mathrm{P}_{d,2}^{(+)}%
\,+\alpha_{_{\mathrm{P}_{d,S_{2}+1}}}^{(-)}\,\mathrm{P}_{d,2}^{(-)},\nonumber
\end{align}
valid for any $j\geq2.$ Therefore, by Definition \ref{source-operator} the
Hermitian operator (\ref{11_}) constitutes a source operator $T_{1\times
S_{2}}^{(W_{d,\Phi})}$ for a Werner state $W_{d,\Phi},$ $d\geq2,$ if
\begin{align}
&  \gamma\,\mathrm{tr}_{\{2,\ldots,S_{2}+1\}\setminus\{j\}}[\mathrm{P}%
_{d,S_{2}+1}^{(+)}]\,+\,\xi\,\mathrm{tr}_{\{2,\ldots,S_{2}+1\}\setminus\{j\}}[\mathrm{P}_{d,S_{2}+1}\,]\label{14}\\
&  =\text{ }W_{d,\Phi},\text{ \ }\forall\text{\ }j\geq2.\text{\ }\nonumber
\end{align}
Substituting (\ref{12}) into (\ref{14}), we come to the following conclusion.
The Hermitian operator
\begin{equation}
T_{1\times S_{2}}^{(W_{d,\Phi})}(\mathrm{P}_{d,S_{2}+1}):=\text{ }%
\gamma\,\mathrm{P}_{d,S_{2}+1}^{(+)}\,+\,\xi\mathrm{P}_{d,S_{2}+1}\label{15}%
\end{equation}
is a $1\times S_{2}$-setting source operator of a Werner state $W_{d,\Phi}$
if
\begin{align}
&  \gamma\,\binom{d+S_{2}}{S_{2}+1}\text{ }\frac{2\mathrm{P}_{d,2}^{(+)}%
}{d(d+1)}\text{ \ }+\,\xi\left(  \alpha_{_{\mathrm{P}_{d,S_{2}+1}}}%
^{(+)}\mathrm{P}_{d,2}^{(+)}\,+\alpha_{_{\mathrm{P}_{d,S_{2}+1}}}%
^{(-)}\,\mathrm{P}_{d,2}^{(-)}\right)  \text{ }\label{16}\\
&  =\text{ }W_{d,\Phi},\nonumber
\end{align}
Substituting (\ref{4}) into (\ref{16}) and taking into account (\ref{10-2}), we
derive
\begin{align}
\xi &  =\frac{1}{\alpha_{_{\mathrm{P}_{d,S_{2}+1}}}^{(-)}}\frac{1-\Phi
}{d(d-1)},\label{18}\\
\gamma &  =(\Phi-\tau_{\mathrm{P}_{d,S_{2}+1}})\binom{d+S_{2}%
}{S_{2}+1}^{-1}.\label{19}%
\end{align}
Since by (\ref{10-2}$)$ $\alpha_{_{\mathrm{P}_{d,S_{2}+1}}}^{(-)}>0,$ it follows from 
(\ref{18}) that  the coefficient $\xi\geq0$ for all $\phi\in[-1,1]$. Relation (\ref{19}) implies $\gamma
\geq0\Leftrightarrow$ $\Phi\in\lbrack\tau_{\mathrm{P}_{d,S_{2}+1}},1]$, where
by Lemma \ref{lemma-first} $\tau_{\mathrm{P}_{d,S_{2}+1}%
}=\frac{1}{S_{2}}\frac{\mathrm{tr}[J_{d,S_{2}+1}\mathrm{P}_{d,S_{2}+1}]}%
{\mathrm{tr}[\mathrm{P}_{d,S_{2}+1}]}.$ \newline Taking into account that, for the
Hermitian operator $J_{d,S_{2}+1}=\sum_{j=2}^{S_{2}+1}F_{1j},$ its minimal
eigenvalue\footnote{See, for example, in \cite{21,22} and references therein.} is equal to 
$\lambda_{\min}(d,S_{2})=-\min\{d-1,S_{2}$\}, we have%
\begin{equation}
-\frac{\min\{d-1,S_{2}\}}{S_{2}}\leq\tau_{\mathrm{P}_{d,S_{2}+1}}%
\leq1,\label{20}%
\end{equation}
where the left hand-side value is attained at the projection $\mathrm{P}%
_{\lambda_{\min}(d,S_{2})}$ onto the invariant subspace of the Hermitian
operator $J_{d,S_{2}+1},$ corresponding to its minimal eigenvalue
$\lambda_{\min}(d,S_{2})<0.$ \newline Thus, if, in the expression (\ref{15})
for the $1\times S_{2}$-setting source operator, we take $\mathrm{P}%
_{d,S_{2}+1}=\mathrm{P}_{\lambda_{\min}(d,S_{2})},$ then this source operator
is positive if
\begin{equation}
\Phi\in\lbrack-\frac{\min\{d-1,S_{2}\}}{S_{2}},1].\label{21}%
\end{equation}
\newline(a) Let $2\leq d\leq S_{2}.$ In this case, $\lambda_{\min}%
(d,S_{2})=1-d$ and the source operator $T_{1\times S_{2}}^{(W_{d,\Phi}%
)}(\mathrm{P}_{\lambda_{\min}(d,S_{2})})$ is positive if $\Phi\in\lbrack
-\frac{d-1}{S_{2}},1].$ Quite similarly, we can construct a $S_{1}\times
1$-setting source operator $T_{S_{1}\times1}^{(W_{d,\Phi})}(\mathrm{P}%
_{\lambda_{\min}(d,S_{1})}),$ for which the condition on positivity reads
$\Phi\in\lbrack-\frac{d-1}{S_{1}},1]$. All this implies that, for $2\leq d\leq
S_{\min},$ a nonseparable Werner state $W_{d,\Phi}$ has positive $1\times
S_{2}$-setting and $S_{1}\times1$-setting source operators if
\begin{equation}
\Phi\in\lbrack-\frac{d-1}{S_{\min}},0).\label{22}%
\end{equation}
\newline\textbf{(b) }Let $d>S_{2}$.
In this case, $\lambda_{\min}(d,S_{2})=-S_{2}$, the dimension of the antisymmetric
subspace of $\mathcal{H}_{d}^{\otimes(S_{2}+1)}$ is nonzero, projection
$\mathrm{P}_{\lambda_{\min}(d,S_{2})}|_{d>S_{2}}=\mathrm{P}_{d,S_{2}+1}%
^{(-)},$ parameter $\tau_{\mathrm{P}_{d,S_{2}+1}^{(-)}}=-1$ and the source operator
(\ref{15}) with $\mathrm{P}_{d,S_{2}+1}^{(-)}$ takes the form%
\begin{align}
\widetilde{T}_{1\times S2}^{(W_{d,\Phi})}(\mathrm{P}_{d,S_{2}+1}^{(-)}) &
:=\widetilde{\gamma}\,\mathrm{P}_{d,S_{2}+1}^{(+)}\text{ }+\text{
}\widetilde{\xi}\,\mathrm{P}_{d,S_{2}+1}^{(-)},\label{23}\\
\widetilde{\gamma}\, &  =\frac{1+\Phi}{2}\binom{d+S_{2}}{S_{2}+1}^{-1}%
\geq0,\nonumber\\
\widetilde{\xi}\, &  =\frac{1-\Phi}{2}\binom{d}{S_{2}+1}^{-1}\geq0.\nonumber
\end{align}
and is positive for all $\Phi\in\lbrack-1,1]$. Quite similarly, for $d>S_{1}$, we construct
the source operator $\widetilde{T}_{S_{1}\times1}^{(W_{d,\Phi})}$ which is positive for
all $\Phi\in\lbrack-1,1].$\newline Combining our results in items (a) and (b)
in view of Theorem \ref{theorem-1} proves the statements of Theorem
\ref{theorem-2}.
\end{proof}

The general analytical results of Theorem \ref{theorem-1} are consistent with
the optimization results of Terhal et. al. \cite{15} for nonseparable Werner
states for $d>S_{\min}$ and for $d=2,$ $S_{\min}=2,3.$

\section{Conclusion}

In the present article, we have derived analytically a new general condition (\ref{10"})
sufficient for a nonseparable Werner state of a dimension $d\leq\min
\{S_{1},S_{2}\}$ to satisfy all Bell inequalities under any $S_{1}\times
S_{2}$-setting correlation scenario with generalized measurements. For a
variety of settings $S_{1},S_{2}\geq1$ with $\min\{S_{1},S_{2}\}\geq2,$ this
new condition is beyond the generally used locality conditions (\ref{5}),
(\ref{6}), found, correspondingly,  by Werner \cite{10} for scenarios with projective parties'
measurements and by Barrett \cite{11} for scenarios with generalized parties' measurements.

For example, for $d=2$, the Barrett's locality bound $\gamma_{gm}%
^{(B)}(2)=\frac{1}{8}$ in (\ref{5}) is less than the value of the new locality
bound (\ref{10"}) in Theorem \ref{theorem-2} under all $S_{1}\times S_{2}%
$--setting correlation scenarios with $\min\{S_{1},S_{2}\}\leq7$. For $d=3,$
the Barrett's bound $\gamma_{gm}^{(B)}(3)=\frac{5}{81}$ is less than the value
of (\ref{10"}) under all scenarios with $\min\{S_{1},S_{2}%
\}\leq32.$ Moreover, under the growth of a dimension $d$, the value of the
Barrett's locality bound $\gamma_{gm}^{(B)}(d)$ $\downarrow0$ while every
nonseparable Werner state becomes Bell local under all $S_{1}\times S_{2}%
$-correlation scenarios where $d>S_{min}$.

The new general result (\ref{10"}) in Theorem 2 includes as particular cases the optimization results
in \cite{15} on Bell locality of a nonseparable two-qubit Werner state and settings $S=2,3$. Finding in Theorem \ref{theorem-2} of the analytical
expression (\ref{23}) for $1\times S_{2}$-setting (similarly, for $S_{1}\times1$- setting)
extension of a Werner state proves explicitly in the 
operator terms the optimization result in \cite{15} for the case $d>S_{\min}$. 

The new results of the present article are particularly important for quantum
applications based on Bell nonlocality of Werner states.
\section{Acknowledgments}
The study in Section 3 of this work was supported by the
Russian Science Foundation under the Grant \textnumero 24-11-00145,
https://rscf.ru/en/project/24-11-00145/ and performed at the Steklov
Mathematical Institute of Russian Academy of Sciences. The study in Sections 1,2 was performed at the HSE University.

\end{document}